\documentclass{llncs}

\usepackage[utf8]{inputenc}
\usepackage[T1]{fontenc}
\usepackage[scaled=0.8]{beramono}

\usepackage{amsmath}
\usepackage{amssymb}
\usepackage{stmaryrd}
\usepackage{mathtools}
\usepackage[shortlabels]{enumitem}
\usepackage{wrapfig}
\usepackage{tabto}
\usepackage{booktabs}
\usepackage{comment}
\usepackage{todonotes}

\usepackage[allcolors=blue,colorlinks=true]{hyperref}

\usepackage{proof}
\usepackage{cleveref}
\usepackage[numbers,sort&compress]{natbib}

\newcommand{\code}[1]{\text{\upshape\ttfamily #1}}

\def\chop  {\mathrel{\raise 0.6ex\hbox{\oalign{\hfil$\scriptscriptstyle
    \mathrm{o}$\hfil\cr\hfil$\scriptscriptstyle\mathrm{9}$\hfil}}}}

\renewcommand{\Relbar}{\mathrel{\mkern-0.5mu=\mkern-0.5mu}}
\newcommand{\sledom}{\Relbar\joinrel\mathrel{|}}
\MakeRobust{\Relbar}
\MakeRobust{\sledom}

\newcommand{\hoare}[3]{\{\,#1\,\}~#2~\{\,#3\,\}}
\newcommand{\pre}[2]{\mathit{wp}(#1 \mid #2)}
\newcommand{\Pre}[2]{\mathit{wp}\bigl(#1 \mid #2\bigr)}

\newcommand{\wpre}{\mathit{wp}}

\newcommand{\eoe}{\hfill\scalebox{0.8}{$\blacksquare$}}

\renewcommand{\vec}[1]{\overline{#1}}
\newcommand{\vx}{\vec{x}}
\newcommand{\vy}{\vec{y}}

\newcommand{\ve}{\vec{e}}

\renewcommand{\phi}{\varphi}

\newcommand{\free}{\mathit{free}}
\newcommand{\vars}{\mathit{vars}}
\renewcommand{\mod}{\mathit{mod}}
\newcommand{\WHILE}{\code{while}}

\newcommand{\IF}{\code{if}}
\newcommand{\ASSUME}{\code{assume}}
\newcommand{\HAVOC}{\code{havoc}}
\newcommand{\THEN}{\code{then}}
\newcommand{\ELSE}{\code{else}}
\newcommand{\SKIP}{\code{skip}}

\newcommand{\isPrime}{\mathit{prime}}
\newcommand{\isOdd}{\mathit{odd}}

\newcommand{\Nil}{\code{[\kern 0.2pt]}}
\newcommand{\Cons}{\mathbin{\code{:\kern -2.5pt:}}} % changed from : to :: to avoid confusion with typing

\newcommand{\gray}[1]{{\color{gray}#1}}

\usepackage{newunicodechar}
\newunicodechar{₀}{\ensuremath{_0}}
\newunicodechar{₁}{\ensuremath{_1}}
\newunicodechar{₂}{\ensuremath{_2}}
\newunicodechar{₃}{\ensuremath{_3}}
\newunicodechar{₄}{\ensuremath{_4}}
\newunicodechar{₅}{\ensuremath{_5}}
\newunicodechar{₆}{\ensuremath{_6}}
\newunicodechar{₇}{\ensuremath{_7}}
\newunicodechar{₈}{\ensuremath{_8}}
\newunicodechar{₉}{\ensuremath{_9}}

\newcommand{\nxt}{\circ\,}
\newcommand{\eventually}{\lozenge\,}
\newcommand{\always}{\Box\,}
\newcommand{\until}{\mathrel{\mathcal{U}}}

\newcommand{\wuntil}{\mathrel{\mathcal{W}}}
\newcommand{\xwuntil}{\mathrel{\mathcal{W}^+\!}}
\newcommand{\release}{\mathrel{\mathcal{R}}}
\newcommand{\xrelease}{\mathrel{\mathcal{R}^+\!}}

\newcommand{\unfold}{\mathit{unfold}}
\newcommand{\step}{\mathit{step}}

\newcommand{\ITER}[1]{\code{iter}(#1)}

\newcommand{\measure}{\mathcal{M}}
\newcommand{\hypothesis}{\mathcal{H}}
\newcommand{\induct}{\mathcal{I}}
\newcommand{\coinduct}{\mathcal{C}}

\newtheorem{algorithm}{Algorithm}
\crefname{algorithm}{algorithm}{algorithms}
\Crefname{algorithm}{Algorithm}{Algorithms}

\newtheorem{axiom}{Axiom}
\crefname{axiom}{axiom}{axioms}
\Crefname{axiom}{Axiom}{Axioms}

\author{Gidon Ernst\\\email{gidon.ernst@uni-a.de}}
\institute{University of Augsburg, Augsburg, Germany \\ LMU Munich, Munich, Germany}

\begin{document}

\title{Weakest Precondition Rules for \\ Programs with Linear Temporal Specifications}

\maketitle

\begin{abstract}
With today's mature auto-active program verification tools
complex functional requirements can be formalized and proved.
To that end, they rely on verification condition generation
to bridge between structured programs and high-level specifications
and the automated theorem provers used in the background.
Integrating software modules into larger systems may necessitate
to consider temporal logic requirements, notably liveness properties over infinite traces.
Unfortunately, most state-of-the-art tools lack explicit support for such temporal specifications.
There are various proposals that address the integration of structured programs and temporal logic, but each comes with some inherent limitation regarding expressiveness or automation.
In this paper, we demonstrate a simple but universal solution
that can be integrated easily into existing verification condition generators.
\end{abstract}

\section{Introduction}

Formal methods aim to assert the quality of software systems with respect to requirements
by formalizing both within a suitable mathematical language
and then establishing a rigorous correspondence between the two.
Depending on the view on the system and the properties of interest,
formal methods cover a wide spectrum of possibilities.

A key challenge is to \emph{bridge across} formal paradigms.
Due to the disparity of design points,
the respective foundations, methods, and tools are not necessarily integrated or even compatible.
Recently, there have been a series of efforts~\cite{ernst:dagstuhl22451} and events to bring together
the high-level behavioral view of systems and their components
with a more data-centric and algorithmic view of software implementations.
One of the goals of the discussions was to clarify what is missing to progress from the higher-level systems view,
where specifications typically expressed in temporal logic over global execution traces,
down to the software perspective,
in which characterizing the properties of individual execution states is usually the main challenge.

As one of the gaps, it was identified that while model-checking
of temporal properties for is commonplace and well understood~\cite{clarke2018handbook},
the same cannot be said when it comes to auto-active
and deductive verificaton approaches for structured programs.
It is possible to encode temporal specificationsin these tools,
but that is error-prone, cumbersome and lacks integration
with the underlying verification condition generation.
Conversely, TLA$^+$~\cite{lamport2002} does have first-class support for temporal properties,
but then the programs need to be encoded into transition relations,
leading to analogous draw-backs.

To construct a program logic for temporal specifications,
several challenges should be addressed specifically:
Executions are potentially infinite,
which suggests co-inductive foundations~\cite{nakata2015hoare}.
Sequential composition of programs needs to be connected with linear traces,
but the straight-forward compositional rule used in~\cite{nakata2015hoare,gurov2024expressive} can not be automated algorithmically.
Dynamic Logic with traces~\cite{beckert2013dynamic} recovers Disjkstra's
elegant approach to sequential composition,
but the approach is limited to non-nested temporal operators.
Reasoning about iteration likewise becomes more complicated,
necessitating more powerful (co-)induction rules~\cite{ioannidis2025structural,schellhorn2014rgitl}
and arguments beyond simple invariants,
which can neatly be captured by introducing continuation parameters
in the correctness judgements~\cite{ioannidis2025structural,gurov2024expressive}.

The \textbf{contribution} of this paper is to assemble the individual ideas found in the literature
into a system that is straight-forward
and for which a verification condition generator can be implemented easily.
As a consequence, the approach is fully compatible with the auto-active verification paradigm,
in which critical proof hints are annotated onto the program source code,
so that verification conditions become implications in the underlying assertion language for which we assume a proof oracle.

The \textbf{scope} of this paper is to present the calculus
and to show-case its functionality on examples.
The approach is sound and complete relative
to the background logic and the supported
induction/coinduction principles associated to temporal operators
and the use of well-founded measures.
We leave it for future work to implement this approach in a mature deductive verification tool and to conduct larger experiments.
Similarly, we leave extensions like concurrency or communication to future work.

\paragraph{Data Availability Statement.}
The formalization of this paper and the correctness of all rules
has been mechanized in Isabelle/HOL.
A verification condition generator for this calculus has been implemented in Scala.

The artifact for an earlier version of this paper~\cite{ernst2026weakestpreconditioncalculusprograms}
is available at \url{https://doi.org/10.5281/zenodo.18427930},
an updated version will be provided shortly.
    % \todo{The artifact will appear over the coming days.}

\section{Preliminaries and Motivation}
\label{sec:preliminaries}

The presentation of this paper discerns imperative \emph{program commands}~$c$
that encompass atomic actions~$A$, sequential composition,
nondeterministic choice, and non-determinstic iteration.
Typical atomic actions include deterministic and nondeterministic assignment
and assumptions.
\begin{flalign*}
\qquad
\textbf{commands} \quad
    c &\Coloneqq A \mid c_1;c_2 \mid c_1 \sqcup c_2 \mid \ITER{c}
    \mid \dots
    && \\
\textbf{actions} \quad
    A &\Coloneqq \SKIP \mid \vy \coloneqq \ve \mid \HAVOC~\vy \mid \ASSUME~P \mid \dots
\end{flalign*}
A suitable and standard interpretation of commands is in terms of finite traces for terminating executions
and infinite traces for non-terminating executions.
Sequential composition is interpreted by the ``chop'' operator,
see e.g.~\cite{itl2015,schellhorn2014rgitl}.
Constructs like~$\IF\text{-}\THEN\text{-}\ELSE$ and~$\WHILE$ loops can be derived as usual.

We consider the assertion language of LTL with the standard operators,
where~$P,Q$ stand for state formulas,
$\odot \in \{ \land, \lor, \Rightarrow, \Leftrightarrow \}$ are the binary connectives,
and $\mathcal{Q} \in \{ \exists, \forall \}$ are the quantifiers.
\begin{flalign*}
\qquad
    \textbf{formulas} \quad
\phi,\psi &\Coloneqq
    P \mid \nxt \phi \mid \eventually \phi \mid \always \phi \mid
    \phi \until \psi \mid
    \phi \release \psi \mid
    \phi \wuntil \psi \mid \cdots
    \\
    & \hspace*{0.44cm}
    \lnot \phi \mid
    \phi \odot \psi \mid
    \mathcal{Q}\ \vx.\ \phi \mid \cdots
    &&
\end{flalign*}
A suitable and standard interpretation is over infinite traces.
We expect that quantifiers determine the values of bound variable
throughout the entire trace, and not just in the initial state.

To bridge between commands~$c$,
which may have finite or infinite executions,
and formulas,
which are evaluated over infinite traces only,
we never view a command in isolation.
Instead, we take a ``continuation-passing'' view, wherein a sequence of commands
must always be followed by a formula~$\phi$
or a placeholder~$\omega(\vx)$ that denotes the remainder trace.
Programs in this work therefore are formed according to the following grammar, and we say that~$\kappa$ in~$c;\kappa$ is the \emph{continuation} of command~$c$.
\begin{flalign*}
\qquad
    \textbf{programs} \quad
\kappa &\Coloneqq
    \phi \mid \omega(\vx) \mid c;\kappa
    &&
\end{flalign*}
For example, $c;\mathit{false}$ encodes that we are interested in non-terminating runs of command~$c$ only,
whereas $c;\ITER{\SKIP};\mathit{false}$ adds a stuttering loop to~$c$
that indefinitely repeats the final state.
This idea mimicks the idea used in model-checking to complete automata by additional transitions.

Placeholders~$\omega(\vx)$ serve as a mechanism for abstraction.
The idea is that fresh identifiers for~$\omega$ can be generated on-demand,
but since they remain abstract, the variables~$\vx'$ on which they depend must be mentioned explicitly.

We collectively refer to commands, formulas, and programs as ``trace properties''~$\tau$.
We denote by $\vars(\kappa)$ the program variables occurring in~$\kappa$
and by~$\mod(\kappa)$ the program variables potentially assigned to by~$\kappa$.
We denote by $\free(\phi)$ the free variables occurring in formula~$\phi$.
When~$\vx'$ are fresh copies of some variables~$\vx$,
we denote the renaming of~$\vx$ to~$\vx'$ by ascribing a prime symbol
as in~$c'$, $\phi'$, and~$\kappa'$, respectively.
We write $\phi \models \psi$ when~$\phi$ implies~$\psi$ over all traces
and we write~$\phi \equiv \psi$ when $\phi \models \psi$ and $\psi \models \phi$ (similarly for other trace properties).

\medskip

Correctness as captured by Hoare logic can be unerstood
as establishing a \emph{contract} for a command~$c$,
where a triple $\hoare{P}{c}{Q}$ expresses precondition~$P$ can be assumed in the pre-state by~$c$,
which in turn has to guarantee postcondition~$Q$ in any final state.
For this work, the idea is that programs is are not running in isolation,
but as part of an environment whose behavior is temporally intertwined with the execution steps.
The contracts are therefore concerned not just with initial and final states,
but rather with entire traces,
for which the environment makes certain \emph{assumptions}~$\alpha$
and the program in turn provides certain \emph{guarantees}~$\gamma$,
which together form a temporal logic contract.
This motivates the following definitions of correctness.
\begin{definition}[Contract]
    \label{def:contract}
A contract $\psi = (\alpha \Rightarrow \gamma)$
consisting of
a temporal logic assumption~$\alpha$
and a temporal logic guarantee~$\gamma$.
\end{definition}
\begin{definition}[Correctness Judgements]
    \label{def:hoare}
For a contract $\alpha \Rightarrow \gamma$
consisting of
a temporal logic assumption~$\alpha$
and a temporal logic guarantee~$\gamma$,
program~$c;\kappa$ is correct,
if all traces produced in an environment satisfying~$\alpha$ imply~$\gamma$.
\begin{align*}
\hoare{\alpha}{c;\kappa}{\gamma}
    \qquad \text{ iff } \qquad
    (c;\kappa) \models (\alpha \Rightarrow \gamma)
\end{align*}
\end{definition}
Note, in contrast to Hoare logic, where the pre- and postcondition are evaluated in different states,
here the distinction into assumptions and guarantees is a conceptual one,
i.e., $\hoare{\alpha}{c;\kappa}{\gamma}$ can be equivalently
expressed as $\hoare{\mathit{true}}{c;\kappa}{\alpha \Rightarrow \gamma}$
or even $\hoare{\alpha\land\lnot\gamma}{c;\kappa}{\mathit{false}}$.
Some readers might find it useful to think of~$\alpha \Rightarrow \gamma$
as a \emph{sequent} that is embedded into the correctness judgement
but which can be manipulated independently from the program.

\Cref{def:hoare} is sufficiently general to
encode state-based Hoare logic
as well as the correctness judgements of related work like~\cite{nakata2015hoare,beckert2013dynamic,gurov2024expressive,schellhorn2014rgitl} (see \cref{sec:related}).

\begin{example}[Specification of a Prime Number Generator]
    \label{ex:primes}
A program that repeatedly increments~$x$ will produce an infinite number of primes:
\begin{align}
\hoare{x \ge 0}{\ITER{x \coloneqq x+1};\mathit{false}}{\always\eventually\isPrime(x)}
    \label{eq:primes}
\end{align}
This program is intuitively correct, because the distance from one prime to the next is bounded.
Formally, the proof depends on a layered inductive argument,
where the main mechanism of repetition is justified from the $\always$-operator
and progress from one prime to the next is justified from well-founded induction over a function~$\delta(x)$ that measures the distance from~$x$ to the next prime.
    \eoe
\end{example}

\section{Weakest Precondition Calculus}
\label{sec:wp}

In this section, we show how to analyze correctness judgements
$\hoare{\alpha}{c;\kappa}{\gamma}$
using a weakest-precondition approach to verification condition generation:
\begin{definition}[Weakest Precondition]
    \label{def:wp}
The weakest precondition of program $\kappa$
with respect to contract~$\psi = (\alpha \Rightarrow \gamma)$,
is the weakest formula~$\alpha'$ so that
$\hoare{\alpha \land \alpha'}{\kappa}{\gamma}$.
This can be expressed equivalently as
$\alpha' \equiv (\kappa \Rightarrow \psi)$.
\end{definition}
We are interested in calculating a plain temporal logic formula~$\pre{\kappa}{\psi}$ that is a weakest precondition for~$\kappa$ and~$\psi$.
\begin{algorithm}[Verification Conditions]
    \label{alg:wp}
The weakest precondition is computed by structural recursion
on the program for an arbitrary~$\psi$:
\begin{align}
\pre{\phi}{\psi}
    & \iff (\phi \Rightarrow \psi)
    \label{wp:stop}
    \\
\pre{(c_1 \sqcup c_2);\kappa}{\psi}
    & \iff   \pre{c_1;\kappa}{\psi}
      \land \pre{c_2;\kappa}{\psi}
    \label{wp:choice}
    \\
\pre{(c_1; c_2);\kappa}{\psi}
    & \iff \pre{c_1; (c_2;\kappa)}{\psi}
    \label{wp:seq}
    \\
\pre{A;\kappa}{\psi}
    & \iff \big(
        \forall\ \vx'.\ \step(A,\vx,\vx')
            \Rightarrow \pre{\kappa'}{\unfold(\psi,\vx,\vx')}
        \big)
    \label{wp:atom}
    \\[4pt]
\pre{\ITER{c};\kappa}{\psi}
    & \iff \text{exists formula } \iota,
      \text{ closure operator } \hypothesis_\iota \text{ for } \iota,
         \label{wp:iter}
         \\
    & \hspace{30pt} \text{and fresh placeholder } \omega \text{ so that}
          \nonumber \\
    & \hspace{30pt}
        (\iota \Rightarrow \psi)
        \land \forall\ \vx.\
            \pre{\kappa}{\iota} \land
                \pre{c;\omega(\vx)}{\hypothesis_\iota(\omega(\vx)) \Rightarrow \iota}
         \nonumber
\end{align}
\end{algorithm}
When there is no leading program command,
we just unfold it into a temporal implication~\eqref{wp:stop}.
The rule for choice~\eqref{wp:choice} splits apart both branches.

Thanks to continuations being part of programs and thanks to associativity of sequential composition
rule~\eqref{wp:seq} gradually exposes the leading atomic command~$A$.
In contrast to vanilla Hoare logic and approaches like~\cite{gurov2024expressive,nakata2015hoare}, there is no need to come up with an intermediate assertion.

When the leading statement is an atomic action~$A$,
rule~\eqref{wp:atom} executes it through its next-state relation $\step(A,\vx,\vx')$.
It relies on a helper function~$\unfold$,
described below, which re-arranges the contract~$\psi$
into current-state and next-state components, whereas the latter are expressed with respect to~$\vx'$.
The rule is reminiscent to how Dijkstra's calculus~\cite{dijkstra1975guarded}
translates the postcondition~$Q$ in~$\wpre(c,Q)$
into the current state by introducing fresh logical variables corresponding to different time points during the execution.
The intuition is that~$\vx'$ copies the values of~$\vx$ with an offset of one step.

Rule~\eqref{wp:iter} for iterations mimicks the structure of invariant-based loop verification.
It introduces a generalization~$\iota$ of the current contract~$\psi$
and produces two verification conditions, one for the exit case,
in which the continuation~$\kappa$ of the loop on its own must establish the contract~$\iota$,
and one for the step case,
in which an arbitrary iteration of the loop body~$c$ is analyzed.
The unviversal quantifier over the modified program variables~$\vx = \mod(c)$ generalizes over arbitrary starting states as usual.
$\hypothesis_\iota$ is a \emph{hypothesis} specific to~$\iota$, as formalized in \cref{def:hypothesis},
whose purpose is to represent the (co-)induction principles of our choice.
It strengthens guarantee~$\iota$ by adding additional assumptions.
Hence, the role of~$\hypothesis_\iota$ is to encode \emph{what} information becomes available at the end of an iteration.
Placeholder $\omega$ that,
by occurring as the continuation of the iteration of~$c$,
encodes \emph{when} the information encoded into the hypothesis generated by~$\hypothesis_\iota$ becomes available.
In practice, we expect the choice of~$\iota$ and~$\hypothesis_\iota$
to be annotated into the program, which is a common strategy
in auto-active verificaiton tools.

\begin{algorithm}[Temporal Unfolding]
    \label{alg:unfold}
The transformation~$\unfold(\psi,\vx,\vx')$ splits
a contract~$\psi$ into its current-state constituents
and its residual next-state guarantees,
in which we replace~$\vx$ by~$\vx'$ and shift them to the current state.
\normalfont
\begin{align*}
\unfold(p)
    & ~=~ p
        \hspace{33.5pt} \text{ keep } \vx
        %\label{eq:unfold-prop}
    \\
\unfold(\nxt \phi,\vx,\vx')
    & ~=~ \phi'
        \hspace{30pt} \text{ replace } \vx \text{ by } \vx'
        %\label{eq:unfold-nxt}
    \\
\unfold(\always\phi,\vx,\vx')
    & ~=~ \unfold(\phi,\vx,\vx') \land \always\phi'
        %\label{eq:unfold-always}
    \\
\unfold(\eventually\phi,\vx,\vx')
    & ~=~ \unfold(\phi,\vx,\vx') \lor \eventually\phi'
        %\label{eq:unfold-eventually}
    \\[4pt]
\unfold(\lnot \phi,\vx,\vx')
    & ~=~ \lnot \unfold(\phi,\vx,\vx')
    \\
\unfold(\phi \odot \psi,\vx,\vx')
    & ~=~ \unfold(\phi,\vx,\vx') \odot \unfold(\psi,\vx,\vx')
        \quad \text{for } \odot \in \{\land,\lor,\Rightarrow,\Leftrightarrow\}
    \\
\unfold(\mathcal{Q}\ \vy.\ \phi,\vx,\vx')
    &\iff \mathcal{Q}\ \vy\,\vy'.\ \unfold(\phi,\vx\,\vy,\vx'\vy')
    \\
    & ~\dots
\end{align*}
\end{algorithm}
The algorithm follows the structure of the formula
and uses the unfolding laws of temporal operators
like $\always \phi \Leftrightarrow \phi \land \nxt \always \phi$.
Additionally, subformulas with a leading $\nxt\_$ operator are ``shifted''
into the current state with respect to the variables~$\vx'$.
Quantifiers introduce primed copies of the bound variables.

\begin{example}[Temporal Unfolding]
For the contract from \cref{ex:primes} 
\begin{align*}
& \unfold(x \ge 0 \Rightarrow \always\eventually\isPrime(x),x,x') = \\
& \quad x \ge 0
    \Rightarrow
        \big( \isPrime(x) \lor \eventually\isPrime(x') \big)
            \land
        \big( \always\eventually\isPrime(x') \big)
\end{align*}
\end{example}
To reason about loops,
hypotheses
uniformly describe the induction and coinduction principles
associated to temporal operators and for well-founded orders.
\begin{definition}[Hypothesis]
    \label{def:hypothesis}
A function~$\hypothesis_\iota(\_)$ from trace properties to trace properties
is a \emph{hypothesis} for a formula~$\iota$ if
\begin{align*}
\tau \models \hypothesis_\iota(\tau) \Rightarrow \iota
    \quad \text{ implies } \quad
\tau \models \iota
    \quad \text{ for all trace properties } \tau
\end{align*}
\end{definition}
\begin{lemma}[Well-founded Induction]
    \label{lem:wf}
For a well-founded measure~$\delta$,
function
$\measure^\delta_\iota(\tau)$
is a hypothesis that is independent of the shape of~$\iota$.
\begin{align*}
\measure^\delta_\iota(\tau)
    ~\coloneqq~
    \exists\ z.\ z = \delta
        \land
            \always (\delta < z \land \tau \Rightarrow \iota)
\end{align*}
\end{lemma}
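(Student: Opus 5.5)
Since $\measure^\delta_\iota$ does not inspect the shape of~$\iota$, we can fix an arbitrary trace property~$\tau$ and formula~$\iota$ and check \cref{def:hypothesis} directly. So assume
\[
\tau \models \measure^\delta_\iota(\tau) \Rightarrow \iota ,
\]
i.e., every trace~$\sigma$ satisfying~$\tau$ and $\exists\, z.\ z = \delta \land \always(\delta < z \land \tau \Rightarrow \iota)$ also satisfies~$\iota$. We must show $\tau \models \iota$. The plan is a single well-founded induction on the value $\delta(\sigma_0)$ of the measure in the initial state. Here we read~$\delta$ as a state term, evaluated in the first state of a trace, and~$z$ as a rigid logical variable, which matches the paper's convention that quantified variables are constant along the trace.

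\textbf{Induction statement.} We prove, by well-founded induction on~$n$, the claim $\Phi(n)$: for every trace~$\sigma$ with $\sigma \models \tau$ and $\delta(\sigma_0) = n$ we have $\sigma \models \iota$. Fix~$n$ and assume $\Phi(m)$ for all $m < n$. Take~$\sigma$ with $\sigma \models \tau$ and $\delta(\sigma_0) = n$. To apply the hypothesis, choose the witness $z \coloneqq n$. The conjunct $z = \delta$ then holds at~$\sigma$. It remains to show $\sigma \models \always(\delta < n \land \tau \Rightarrow \iota)$. For this, take any suffix $\sigma^k$ with $\delta(\sigma^k_0) < n$ and $\sigma^k \models \tau$. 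The induction hypothesis $\Phi(\delta(\sigma^k_0))$ applied to $\sigma^k$ yields $\sigma^k \models \iota$. So $\measure^\delta_\iota(\tau)$ holds at~$\sigma$, and the assumed entailment gives $\sigma \models \iota$. This establishes $\Phi(n)$ for all~$n$, and hence $\tau \models \iota$.

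\textbf{Main obstacle.} The delicate step is justifying that the induction hypothesis applies at suffixes~$\sigma^k$ and not only at~$\sigma$. This needs the entailment $\tau \models \cdots$ to be read as validity over \emph{all} traces, suffixes included; it is, since $\models$ quantifies over all traces. It also needs $\tau$, $\iota$ and~$\delta$ to be evaluated relative to the suffix in the usual LTL way. A secondary subtlety concerns valuations: if $\tau$ or $\iota$ contain free logical variables, the induction must be carried out for a fixed valuation of them, with~$z$ chosen fresh so that binding it does not capture anything. Well-foundedness of~$<$ is used only to license the strong induction.
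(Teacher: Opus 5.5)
Your proof is correct. The paper gives no written proof of this lemma and defers to the Isabelle/HOL mechanization, so there is nothing to compare against in detail; your well-founded induction on the initial value of~$\delta$ is the natural justification that $\measure^\delta_\iota$ is a hypothesis, with $z$ instantiated to that value and the $\always$-conjunct discharged at every suffix by the induction hypothesis. Your constant witness for~$z$ would also work if bound variables were read as flexible, so your rigid-variable reading loses no generality.
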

The auxiliary variable~$z$ captures the value of~$\delta$ with respect to the current state.
In any (future) state in which~$\delta$ has decreased properly,
we can appeal to the hypothesis to generate the fact that~$\iota$ holds again
for as long as we have established the additional assumptions~$\tau$, too.

More interestingly, we can derive an induction principle from an environment assumption in a contract
$\eventually \phi \Rightarrow \gamma$
and a coinduction principle for a guarantee in a contract
$\alpha \Rightarrow \always \phi$.
\begin{lemma}[Induction, Coinduction]
    \label{lem:ci}
For the formulas~$\iota$ of the respective shape,
the following functions~$\hypothesis_\iota$ are hypotheses for arbitrary~$\phi,\alpha,\gamma$.
\begin{align*}
\induct_{\eventually \phi \Rightarrow \gamma}(\tau)
    & ~\coloneqq~
        \phantom{\lnot} ~\phi \xrelease (\tau \Rightarrow \gamma)
    \\
\coinduct_{\alpha \Rightarrow \always \phi}(\tau)
    & ~\coloneqq~
        \lnot \big(\phi \xwuntil (\alpha \land \tau) \big)
\end{align*}
\end{lemma}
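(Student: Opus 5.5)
The plan is to prove both claims directly from \cref{def:hypothesis}, reading $\tau \models \chi$ as ``$\chi$ holds on every trace satisfying~$\tau$''. This reading ranges over all traces, so in particular over all suffixes of a given trace, and the whole argument rests on that. In each case I assume the premise $\tau \models \hypothesis_\iota(\tau) \Rightarrow \iota$ and prove $\tau \models \iota$ by well-founded induction on a natural number extracted from a trace, namely the position of the first witness, respectively counterexample. I fix the semantics of the strict operators as follows, with $\sigma^j$ denoting the suffix of~$\sigma$ from position~$j$.
\begin{itemize}
\item $\sigma \models \phi \xrelease \chi$ iff for all $j \ge 1$, either $\sigma^k \models \phi$ for some $0 \le k < j$, or $\sigma^j \models \chi$.
\item $\sigma \models \phi \xwuntil \chi$ iff either $\sigma^k \models \phi$ for all $k$, or there is $j \ge 1$ with $\sigma^j \models \chi$ and $\sigma^k \models \phi$ for all $0 \le k < j$.
\end{itemize}
Since quantified variables are rigid along traces, shifting to suffixes is harmless.

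\emph{Induction.} I show by strong induction on $n$ that every $\sigma$ with $\sigma \models \tau$ whose first $\phi$-position is $n$ satisfies $\gamma$. This suffices, because on traces without $\phi$ the implication $\eventually\phi \Rightarrow \gamma$ is trivial. Given such a $\sigma$, the premise yields $\iota$ as soon as I establish $\sigma \models \phi \xrelease (\tau \Rightarrow \gamma)$, since $\eventually\phi$ holds. Take any $j \ge 1$. If $j > n$, the release is discharged by $k = n$. If $1 \le j \le n$, the suffix $\sigma^j$ has its first $\phi$-position at $n-j < n$. By the induction hypothesis, $\sigma^j \models \tau$ implies $\sigma^j \models \gamma$, which is exactly $\sigma^j \models \tau \Rightarrow \gamma$.

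\emph{Coinduction.} Suppose, for contradiction, that some $\sigma$ satisfies $\tau$ and $\alpha$ but not $\always\phi$. Choose such a $\sigma$ whose first $\lnot\phi$-position $n$ is minimal. The premise at $\sigma$, together with $\tau$, $\alpha$ and $\lnot\always\phi$, forces $\lnot\coinduct_{\alpha\Rightarrow\always\phi}(\tau)$, i.e.\ $\sigma \models \phi \xwuntil (\alpha \land \tau)$. The ``always $\phi$'' disjunct is impossible. Hence there is $j \ge 1$ with $\sigma^j \models \alpha \land \tau$ and $\phi$ holding on all positions $0,\dots,j-1$, which forces $j \le n$. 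The suffix $\sigma^j$ then satisfies $\tau$ and $\alpha$ and violates $\phi$ first at position $n - j < n$. This contradicts minimality.

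The main obstacle is not the combinatorics above but getting the strictness conventions of $\xrelease$ and $\xwuntil$ exactly right. Strictness, i.e.\ $j \ge 1$, is what makes the induction measure decrease. The offset $k < j$ versus $k \le j$ in the release clause decides whether position $n$ itself is covered. If the paper's official semantics differs by one position, I would adjust the measure: for instance, induct on $n+1$, or treat the case $j = n$ separately using that $\phi$ holds at~$n$.
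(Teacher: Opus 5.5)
Your proof is correct, but it takes a more concrete route than the paper. The paper gives no written proof. It only says that the fixpoint characterizations $\eventually\phi \iff \mu\,\tau.\ \phi\xrelease\tau$ and $\always\phi \iff \nu\,\tau.\ \phi\xwuntil\tau$ in \eqref{eq:FR-unfold} and \eqref{eq:GW-unfold} justify the lemma, with the details mechanized in Isabelle/HOL. So the intended argument is Park-style fixpoint (co)induction.

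\emph{Induction.} For $\induct$ one checks $\phi\xrelease Y \models Y$ for $Y = \eventually\phi\land(\tau\Rightarrow\gamma)$. This uses monotonicity of $\xrelease$, the equation $\phi\xrelease\eventually\phi \equiv \eventually\phi$, and the premise. Least-fixpoint induction then gives $\eventually\phi\models\tau\Rightarrow\gamma$.

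\emph{Coinduction.} For $\coinduct$ the premise gives $\tau\land\alpha\models(\phi\xwuntil(\alpha\land\tau))\lor\always\phi$. The second disjunct is absorbed because $\always\phi\models\phi\xwuntil\psi$. Hence $\tau\land\alpha$ is a post-fixpoint of $X\mapsto\phi\xwuntil X$ and so entails $\always\phi$.

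Your strong induction on the first $\phi$-position and your minimal counterexample on the first $\lnot\phi$-position are exactly these two principles unrolled in the concrete model of infinite traces. Fixing the first $\phi$-position plays the role of the extra conjunct $\eventually\phi$ in $Y$. Ruling out the ``always $\phi$'' disjunct plays the role of the absorption step.

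What each approach buys:
\begin{itemize}
\item Your version is elementary and self-contained. Its cost is that it depends on explicit pointwise semantics of $\xrelease$ and $\xwuntil$, and on the model being closed under suffixes.
\item The fixpoint route uses only monotonicity and the two characterizations. It is therefore model-independent and transfers directly to the analogous principles for $\until$ and $\wuntil$ that the paper alludes to.
\end{itemize}

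Regarding your concern about off-by-one conventions: the pointwise clauses you fixed are exactly the greatest-fixpoint readings of the unfolding laws \eqref{eq:FR-unfold} and \eqref{eq:GW-unfold}. No adjustment of the measure is needed.
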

The negation in front of the coinduction principle~$\coinduct$
comes from the fact that hypotheses are added as assumptions in~\eqref{wp:iter},
whereas~$\always \phi$ occurs in the conclusion of the guarantee.
Adding the negation compensates for this switch of polarity.

In this lemma, we make use of variants of the ``weak until'' and ``(weak) release'' operators from LTL,
which expose property~$\phi$ for at least one step proper before recurrence is allowed.
The second construction to prove a $\always$-property closely follows an idea shown in~\cite{beckert2013dynamic}, wherein it is expressed using~$\wuntil$ and~$\nxt$.
The appraoch for~$\eventually$ is dual
it it is not difficult to define similar principles for other LTL operators like~$\until$ and~$\wuntil$.

Both operators $\xrelease$ and $\xwuntil$ can be defined as abbreviations.
For this discussion what matters are the following fixpoint characterization,
which justify \cref{lem:ci},
and unfolding laws, which are used by \cref{alg:unfold}.
\begin{align}
\eventually \phi
    & \iff \mu\, \tau.\
            \phi \xrelease \tau
    && \text{where } &
            \phi \xrelease \psi
                & \iff \phi \lor \nxt (\psi \land (\phi \xrelease \psi))
    \label{eq:FR-unfold}
    \\
\always \phi
    & \iff \nu\, \tau.\
            \phi \xwuntil \tau
    && \text{where } &
            \phi \xwuntil \psi
                & \iff \phi \land \nxt (\psi \lor (\phi \xwuntil \psi))
    \label{eq:GW-unfold}
\end{align}
This formulation allows one to take \emph{multiple} steps before recurring to~$\tau$ in the fixpoint characterizations of~$\always\phi$ and~$\eventually\phi$.
\begin{example}[Verification conditions for a simple loop]
    \label{ex:odd}
We aim to calculate the weakest precondition
of a loop with two assignments for the guarantee that~$x$ is always an odd number.
\begin{align*}
\pre{\ITER{x \coloneqq x + 2; x \coloneqq x + 4};\mathit{false}}
    {\always \underline{\isOdd(x)}}
\end{align*}
Since the leading statement is a loop we apply rule~\eqref{wp:iter}.
The loop clearly preserves~$\isOdd(x)$
so we take~$\iota \equiv \isOdd(x) \Rightarrow \always \underline{\isOdd(x)}$,
wherein the premise $\isOdd(x)$ takes the role of a state-based invariant.
The side condition that $\iota$ implies the desired guarantee
will leave a residual constraint $\isOdd(x)$ as the weakest precondtion,
so far so good.
The loop is abstracted by choosing the hypothesis~$\coinduct_\iota(\omega(x))$
from~\cref{lem:ci}
which yields two conjuncts for fresh variables~$\vx'$ and placeholder~$\omega$:
\begin{align*}
& \text{base case} &
&\Pre{\mathit{false}}{\iota}
    \stackrel{\eqref{wp:atom}}{\iff}
        (\mathit{false} \Rightarrow \iota)
        \tag{\checkmark}
    \\[4pt]
& \text{step case} &
\forall x.\ &\Pre{x \coloneqq x + 2; x \coloneqq x + 4;\omega(x)~}
     {~\coinduct_\iota(\omega(x)) \Rightarrow \iota~}
\end{align*}
The base case is vacuous, as we have assumed the loop does not exit in the first place.
Let's expand the compound guarantee of the step case
\begin{align*}
\lnot \big(~
    \underbrace{\underline{\isOdd(x)} \xwuntil (\isOdd(x) \land \omega(x)) \big)}_\text{hypothesis}
    \big)
    \Rightarrow
        \big(\underbrace{~\isOdd(x)\phantom{\underline{(}}}_\text{invariant}
             \Rightarrow 
                 \underbrace{~\always \underline{\isOdd(x)}~}_\text{original guarantee} \big)
\end{align*}
which is equivalent to
\begin{align}
    \underbrace{~\isOdd(x)\phantom{\underline{(}}}_\text{invariant}
    \Rightarrow
            \underbrace{~\underline{\isOdd(x)} \xwuntil (\isOdd(x) \land \omega(x)) ~}_\text{coinductive goal}
             \lor
                 \underbrace{~\always \underline{\isOdd(x)}~}_\text{original guarantee}
        \label{eq:odd}
\end{align}
The coinductive goal generated by~$\coinduct_\iota$ now represents an \emph{alternative} goal,
in addition to the original guarantee,
which we can target on those executions of the loop body which terminate and eventually make~$\omega(x)$ true.
Including the original guarantee still leaves the option to apply a different proof strategy to those executions which diverge.

The weakest precondition of the loop body is calculated by applying rule~\eqref{wp:atom} twice and finally rule~\eqref{wp:stop},
which introduces two fresh variables, denoted~$x'$ and~$x''$.
Guarantee~\eqref{eq:odd} is unfolded using \cref{alg:unfold} and \eqref{eq:GW-unfold}.
Observe that the rule for atomic actions~\eqref{wp:atom} never pulls
state-formulas from the guarantee out of the weakest precondition operator.
Therefore, assumptions about the trace are neatly separated from the desired conclusions.
\begin{align*}
& \isOdd(x) \land x' = x + 2 \land x'' = x + 4 \land \omega(x'')
    \implies \\[8pt]
& \qquad \underline{\isOdd(x)} \land
    \Big( \underbrace{\isOdd(x') \land \omega(x')}_\text{unavailable} ~~\lor~~
  \underline{\isOdd(x')} \land
     \big(  \underbrace{\isOdd(x'') \land \omega(x'')}_\text{(\checkmark)} {} \lor \dots \big) \Big) \\
& \qquad  \gray{{} \lor {} \underbrace{~\underline{\isOdd(x)} \land \underline{\isOdd(x')} \land \always \underline{\isOdd(x'')}~}_\text{original}}
\end{align*}
For the first step, unfolding~$\xwuntil$ requires us to prove the underlined guarantee at least once, for~$x$.
After the first step we have the first opportunity to recur,
by proving~$\isOdd(x') \land \omega(x')$,
but while the invariant holds,
the placeholder~$\omega$ that signifies the end of the iteration
is rightfully not yet available for~$x'$.
Hence, we have to prove the underlined guarantee a second time, for~$x'$.
After the second step, $\omega(x'')$ becomes available from the continuationand since invariant~$\isOdd(x'')$ holds, too, we can conclude the proof~(\checkmark).
The original guarantee (grey) is unfolded, too, and kept around but remains irrelevant in this example.
    \eoe
\end{example}

\begin{lemma}[Compositionality of Hypotheses]
Given two hypotheses~$\hypothesis_1$ and~$\hypothesis_2$ for~$\iota$,
their composition $\hypothesis_1 \triangleright \hypothesis_2$ is again a hypothesis for~$\iota$
\begin{align*}
(\hypothesis_1 \triangleright \hypothesis_2)(\tau)
    ~\coloneqq~ \hypothesis_1(\tau) \land
                \hypothesis_2\big(\tau \land \hypothesis_1(\tau)\big)
\end{align*}
\end{lemma}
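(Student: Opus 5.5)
We have to show that whenever $\tau \models (\hypothesis_1 \triangleright \hypothesis_2)(\tau) \Rightarrow \iota$, also $\tau \models \iota$. The plan is to peel off the two hypotheses one at a time: first use $\hypothesis_2$, applied to a suitably strengthened trace property, and then use $\hypothesis_1$ on $\tau$ itself. The only facts needed are \cref{def:hypothesis}, read as a closure property of validity of implications, and the propositional laws of currying and modus ponens, which hold pointwise on every trace.

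Concretely, the assumption is
\begin{align*}
\tau \models \bigl(\hypothesis_1(\tau) \land \hypothesis_2(\tau \land \hypothesis_1(\tau))\bigr) \Rightarrow \iota .
\end{align*}
Set $\tau_1 \coloneqq \tau \land \hypothesis_1(\tau)$, which is again a trace property. Currying pointwise on traces turns the assumption into
\begin{align*}
\tau_1 \models \hypothesis_2(\tau_1) \Rightarrow \iota .
\end{align*}
Since $\hypothesis_2$ is a hypothesis for $\iota$, \cref{def:hypothesis} applied to $\tau_1$ gives $\tau_1 \models \iota$. Unfolding $\tau_1$, this says $\tau \models \hypothesis_1(\tau) \Rightarrow \iota$. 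Now \cref{def:hypothesis} for $\hypothesis_1$, applied to $\tau$, yields $\tau \models \iota$, which is exactly what we need.

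The main obstacle is to make sure that the step from $\tau \models (\phi \land \psi) \Rightarrow \iota$ to $\tau \land \phi \models \psi \Rightarrow \iota$ is legitimate under the paper's semantics of $\models$. This is where entailment may range over programs and placeholders $\omega$, and not only over formulas. I will read $\tau \models \phi$ as inclusion of trace sets, $[\![\tau]\!] \subseteq [\![\phi]\!]$, with conjunction as intersection and implication as relative complement. Under that reading both currying directions are immediate set-theoretic identities.

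It also has to be justified that $\tau \land \hypothesis_1(\tau)$ is an admissible argument of $\hypothesis_2$. Because \cref{def:hypothesis} quantifies over all trace properties, I will take trace properties to be closed under conjunction. I expect this to hold in the Isabelle formalization, where trace properties are simply sets or predicates on traces. With these two points settled, the argument is a two-line chaining of the hypothesis property.
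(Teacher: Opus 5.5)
Your proof is correct and is essentially the intended argument: after currying, you apply the hypothesis property of $\hypothesis_2$ at $\tau \land \hypothesis_1(\tau)$ to obtain $\tau \models \hypothesis_1(\tau) \Rightarrow \iota$, and then discharge $\hypothesis_1$ at $\tau$. The paper gives no written proof beyond its Isabelle mechanization over a shallow embedding, where pointwise currying and closure of trace properties under conjunction hold trivially; its remark that $\hypothesis_2$ is applied relative to the information supplied by $\hypothesis_1$ is exactly why your first step yields the premise needed for the second.
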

Note, the operator is not symmetric:
$\hypothesis_2$ is applied relative to the information that~$\hypothesis_1$ provides.
The construction resembles inductions over lexical combinations
of well-founded orders.
In some cases, including~$\hypothesis_1$ inside~$\hypothesis_2$ may not be necessary,
for example, when~$\hypothesis_1(\tau) \Rightarrow \always \hypothesis_1(\tau)$ is persistent over execution steps.
We leave such optimizations for future work.

\begin{example}[Verification conditions for a Prime Number Generator]
We sketch the verification conditions for \cref{ex:primes}.
\begin{align*}
\pre{\ITER{x \coloneqq x+1};\mathit{false}}{\iota}
    \quad \text{ for } \quad \iota = \big( x \ge 0 \Rightarrow \always\eventually\isPrime(x) \big)
\end{align*}
Proof structure is provided by \cref{lem:ci} for setting up the main mechanism for repetition via the outer~$\always$ operator.
A nested induction is by \cref{lem:wf} for a measure~$\delta(x)$
that counts the distance to some prime larger than~$x$.
The proof is then based on the composition~$\hypothesis \coloneqq \coinduct_\iota \triangleright \measure_\iota^\delta$.

The core of the verification conditions contains in its premise
the assumptions about the invariant at loop head,
the equation produced by the assignment,
the continuation~$\omega(x')$ (first line)
as well as the assumption added by the well-founded induction principle (second line, omitting the auxiliary~$z$).
\begin{align*}
& x \ge 0 \land x' = x + 1 \land \omega(x') \land
    \measure_\iota\big(\omega(x) \land \coinduct_\iota(\omega(x))\big)
        \implies \\
& \qquad
    \underbrace{
        \big(~\underbrace{\isPrime(x) \lor \eventually \isPrime(x')}_{\unfold(\eventually \isPrime(x), x,x')} \big)
                \land \big( (\underbrace{~x' \ge 0 \land \omega(x')~}_\text{\checkmark}) \lor \dots \big) }_{
        \unfold((\eventually \isPrime(x)) \xwuntil x (\ge 0 \land \omega(x)))
    }
\end{align*}
Similarly to \cref{ex:odd}, we present the coinductive goal in the conclusion of the verification condition.
We must demonstrate that at least one prime will be produced,
either in the current state for~$x$ or later for~$x'$.
The remainder of the executions are covered (\checkmark)
by re-establishing the invariant and by relying on continuation~$\omega(x')$.

If we are lucky, $x$~is indeed prime.
Otherwise, the following property of~$\delta$
\begin{align}
\lnot \isPrime(x) \implies \delta(x+1) < \delta(x)
    \label{lem:primes}
\end{align}
allows us to make use the hypothesis generated by the well-founded induction.
Recal that $\measure_\iota\big(\omega(x) \land \coinduct_\iota(\omega(x))\big)$
is defined in terms of an $\always$-formula that holds at any time, including the state~$x'$ at loop exit in particular:
\begin{align*}
\delta(x') < \delta(x)
    \land \omega(x') \land \coinduct_\iota(\omega(x))
        \Rightarrow \always \eventually \isPrime(x')
\end{align*}
With the assumptions~$x'$ and~$\omega(x')$ and \cref{lem:primes},
by the definition of~$\coinduct_\iota$, we end up with a disjunction
as the result of instantiating the hypothesis~$\measure_\iota^\delta$:
\begin{align*}
    \big(\eventually \isPrime(x') \xwuntil \dots\big)
        \lor \big(\always \eventually \isPrime(x')\big)
\end{align*}
Both of these imply that $\eventually \isPrime(x')$, which concludes the proof.
    \eoe
\end{example}

\section{Soundness and Completeness}
\label{sec:theory}

The presentation in this paper is based on the following axiomatic theory.
All proofs are mechanized in Isabelle/HOL using a shallow embedding of programs and formulas into a semantic model that validates all axioms.

\begin{axiom}[Atomic Commands]
    \label{ax:atom}
We assume the semantics of atomic actions~$A$ to be given as a relation~$\step(A,\vx,\vx')$ that satisfies
\begin{align}
A;\kappa
    ~\equiv~ \exists\ \vx'.\ \step(A,\vx,\vx') \land \kappa' \land (\forall\ \tau.\ \tau' \Leftrightarrow \nxt \tau)
        \label{eq:atom}
\end{align}
\end{axiom}
In~$A;\kappa$, continuation~$\kappa$ is evaluated over the trace
starting \emph{after} the first step of atomic action~$A$.
In contrast, continuation~$\kappa'$ is evaluated \emph{now} in the current state.
The valuation witnessing the existence of~$\vx'$
is therefore shifted by an offset of one in relation to~$\vx$.
As a consequence, for an arbitrary trace property~$\tau$,
we have $\nxt \tau \Leftrightarrow \tau'$.
Note that the quantification over trace properties~$\tau$ is higher-order,
it will be used only as part of the soundness proof.

\begin{lemma}[Correctness of \cref{alg:unfold}]
    \label{lem:unfold}
Assuming where~$\tau'$ denotes the renaming from~$\vx$ to~$\vx'$ in trace property~$\tau$,
function~$\unfold$ satisfies
\begin{align*}
\forall\ \tau.\ \tau' \Leftrightarrow \nxt \tau
\quad \models \quad \unfold(\psi,\vx,\vx') \Leftrightarrow \psi
\end{align*}
\end{lemma}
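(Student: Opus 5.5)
The plan is structural induction on $\psi$, following the clauses of \cref{alg:unfold}. Throughout, we fix a trace on which the assumption $\forall\,\tau.\ \tau' \Leftrightarrow \nxt\tau$ holds. We show that $\unfold(\psi,\vx,\vx')$ and $\psi$ agree on this trace. Because the assumption is universally quantified over all trace properties, it is available for every subformula. This lets the induction hypothesis be used pointwise at the same trace without re-establishing anything.

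First come the base and purely propositional cases. For a state formula $p$ the claim is trivial, since $\unfold(p)=p$. For $\lnot$ and the binary connectives $\odot$, the claim follows from the induction hypotheses for the immediate subformulas, because $\Leftrightarrow$ is a congruence for all connectives evaluated at the same trace. The $\nxt$ case is a direct instance of the assumption: $\unfold(\nxt\phi,\vx,\vx')=\phi'$ and $\phi'\Leftrightarrow\nxt\phi$, taking $\tau\coloneqq\phi$. No induction hypothesis is needed here, which is why the recursion stops at $\nxt$.

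For the temporal operators we use the standard expansion laws $\always\phi\Leftrightarrow\phi\land\nxt\always\phi$ and $\eventually\phi\Leftrightarrow\phi\lor\nxt\eventually\phi$. We rewrite $\nxt\always\phi$ into $(\always\phi)'$ via the assumption with $\tau\coloneqq\always\phi$, and note that $(\always\phi)'=\always\phi'$ because renaming commutes with the operators. The first conjunct is handled by the induction hypothesis for $\phi$. The same argument, using \eqref{eq:FR-unfold} and \eqref{eq:GW-unfold}, covers $\xrelease$ and $\xwuntil$, and analogously $\until$, $\release$ and $\wuntil$.

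The quantifier case is the main obstacle. There the induction is applied to $\unfold(\phi,\vx\,\vy,\vx'\vy')$ with an \emph{extended} variable vector, so the hypothesis must be taken over the extended renaming. Hence the induction should be stated for arbitrary variable vectors and primed copies. For the extended renaming we must produce, at the given trace, values for $\vy'$ such that $\forall\tau.\ \tau'\Leftrightarrow\nxt\tau$ holds for renamings that include $\vy$. Since quantified variables are rigid (their value is fixed along the whole trace), we would try choosing $\vy'$ equal to $\vy$'s value shifted by one step, which for rigid variables is the same value. This shows that $\mathcal{Q}\,\vy.\,\phi$ is equivalent to $\mathcal{Q}\,\vy\,\vy'.\,\unfold(\ldots)$: the additional quantifier over $\vy'$ collapses, since only the choice $\vy'=\vy$ is consistent with the extended assumption. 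Making this collapse precise, in particular for $\forall$ where the extended assumption must be an antecedent, is the step needing the most care. In the Isabelle shallow embedding we would prove a small auxiliary lemma that the assumption extends to the bound variables under rigid semantics.
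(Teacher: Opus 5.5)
Your treatment of state formulas, connectives, $\nxt$, and the temporal operators matches the paper's argument: you use the expansion law, then the premise with $\tau \coloneqq \always\phi$ to turn $\nxt\always\phi$ into $\always\phi'$. Generalising the induction over variable vectors is also right. The gap is the quantifier case, which the paper does not spell out and which your argument gets wrong.

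The claim that the quantifier over $\vy'$ ``collapses'' because only $\vy'=\vy$ is consistent with the extended premise is not a valid step. $\mathcal{Q}\,\vy'$ ranges over all values, and the extended premise is not part of the formula. Choosing the consistent $\vy'$ therefore only yields the two easy directions: $\exists\vy.\,\phi$ to $\exists\vy\,\vy'.\,\unfold(\dots)$, and $\forall\vy\,\vy'.\,\unfold(\dots)$ to $\forall\vy.\,\phi$. Under your rigid reading the other two directions are in fact false. The formula $\psi = \exists y.\ y = x \land \nxt(y = x)$ unfolds to $\exists y\,y'.\ y = x \land y' = x'$, which is valid. For rigid $y$, however, the premise makes $\psi$ equivalent to $x = x'$. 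The paper's quantifiers are flexible: a bound variable gets a value at every position. That is why bound variables receive primed copies at all, and why $\exists\vx'$ in \eqref{eq:atom} can describe the shifted trace.

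The missing idea is a splicing argument. Prove, alongside the main claim and for the whole (extended) vector, that the unprimed variables occur in $\unfold(\phi,\vx,\vx')$ only outside temporal operators. This holds because each temporal subformula is either stripped of its leading $\nxt$ and primed, or primed as a whole. Hence the truth of the unfolded formula depends on $\vx$ only through the current state.

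Now let $u,u'$ be arbitrary value sequences for $\vy,\vy'$, and set $\tilde u(0) = u(0)$ and $\tilde u(i+1) = u'(i)$.
\begin{itemize}
\item The unfolded body has the same truth value under $(u,u')$ and under $(\tilde u,u')$, by the invariant above.
\item Under $(\tilde u,u')$ the extended premise holds, using the original premise and freshness of $\vy'$.
\item So the induction hypothesis applies there and converts the unfolded body into $\phi$ under $\tilde u$.
\end{itemize}
This gives $\exists\vy\,\vy'.\,\unfold(\phi,\vx\vy,\vx'\vy') \Rightarrow \exists\vy.\,\phi$ and, dually, $\forall\vy.\,\phi \Rightarrow \forall\vy\,\vy'.\,\unfold(\phi,\vx\vy,\vx'\vy')$. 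Your choice of $\vy'$ as the shift of $\vy$ covers the converse directions.
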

\begin{proof}
By induction on the structure of formula~$\psi$.
The interesting cases are subformulas~$\nxt \phi$
that occur either inside~$\psi$ directly
or as part of the unfolding,
for which we make use of the premise to remove the leading $\nxt$-operator.
    \qed
\end{proof}

\begin{lemma}[Correctness of rule~\eqref{wp:iter} for iterations]
    \label{lem:iter}
Given a hypothesis~$\hypothesis_i$ for a formula~$\iota$ we have that
\begin{align*}
\kappa \models \iota
\quad \text{ and } \quad
\text{for all } \tau.\ c;\tau \models \hypothesis_\iota(\tau) \Rightarrow \iota
\quad \text{ implies } \quad \ITER{c};\kappa \models \iota
\end{align*}
\end{lemma}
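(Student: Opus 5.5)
The plan is to instantiate the defining property of a hypothesis (\cref{def:hypothesis}) with the trace property $\tau \coloneqq \ITER{c};\kappa$ itself. By that definition, it suffices to show
\begin{align*}
\ITER{c};\kappa \models \hypothesis_\iota(\ITER{c};\kappa) \Rightarrow \iota ,
\end{align*}
since then $\ITER{c};\kappa \models \iota$ follows immediately. For this I need the standard unfolding law of iteration in the semantic model, which I take to be validated alongside \cref{ax:atom}:
\begin{align*}
\ITER{c};\kappa ~\equiv~ \kappa \lor \big(c;(\ITER{c};\kappa)\big).
\end{align*}
This holds because iteration is interpreted as zero or finitely many repetitions, plus infinite repetition, combined via chop. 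The proof then proceeds by case analysis on an arbitrary trace satisfying $\ITER{c};\kappa$ and $\hypothesis_\iota(\ITER{c};\kappa)$.

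In the first case the trace satisfies $\kappa$. The first premise $\kappa \models \iota$ gives $\iota$ directly, without using the hypothesis.

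In the second case the trace satisfies $c;(\ITER{c};\kappa)$. I instantiate the second premise, which is universally quantified over $\tau$, with $\tau \coloneqq \ITER{c};\kappa$. This yields $c;(\ITER{c};\kappa) \models \hypothesis_\iota(\ITER{c};\kappa) \Rightarrow \iota$, and together with the assumed $\hypothesis_\iota(\ITER{c};\kappa)$ it gives $\iota$. The key point is that the hypothesis is assumed at the start of the whole trace, while its argument $\tau$ only names the property of the suffix after one pass of $c$. This is exactly how the premise is phrased, so no shifting is required.

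The main obstacle is the unfolding law in the presence of infinite iterations. I must check that a trace in which $c$ is repeated infinitely often, or in which one iteration of $c$ diverges, is still covered by the right disjunct $c;(\ITER{c};\kappa)$. This amounts to showing that the chop-based semantics of $\ITER{c}$ is a fixpoint of $X \mapsto \kappa \lor c;X$, taken as the greatest one if infinite repetition is included. Only the ``$\models$'' direction of the unfolding is needed. If the model defines iteration as an explicit union over $n$-fold repetitions plus an $\omega$-repetition, I would prove this direction by peeling off the first repetition in each component. Beyond this, the argument is a purely propositional use of \cref{def:hypothesis}; no induction is needed here, because all (co-)inductive content is delegated to the hypothesis $\hypothesis_\iota$.
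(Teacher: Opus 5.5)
Your proposal is correct and matches the paper's proof: you strengthen the goal via \cref{def:hypothesis} with $\tau = \ITER{c};\kappa$, split using the unfolding law, discharge the base case from $\kappa \models \iota$, and discharge the step case by instantiating the second premise at $\tau = \ITER{c};\kappa$. The unfolding law you flag as the main obstacle is not proved in the paper; it is taken as axiom~\eqref{ax:iter} of \cref{ax:structured}, and the remaining justification is left to the mechanized semantic model.
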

\begin{proof}
The proof has several key steps.
First, we must realize that the conclusion should be strengthened
by hypothesis $\hypothesis_\iota$ \emph{before} taking apart the cases
of loop exit and iteration.
The correct instance for~$\tau$ in \cref{def:hypothesis} is~$\tau = \ITER{c};\kappa$,
which retains the fact that the loop is throughout the proof:
\begin{align*}
\ITER{c};\kappa \models \hypothesis_\iota(\ITER{c};\kappa) \Rightarrow \iota
    \tag{claim}
\end{align*}
Only then do we take apart the two cases using
\eqref{ax:iter} from \cref{ax:structured}:
\begin{align*}
\kappa \models \hypothesis_\iota(\ITER{c};\kappa) \Rightarrow \iota
    \tag{base}
    \\
c;\ITER{c};\kappa \models \hypothesis_\iota(\ITER{c};\kappa) \Rightarrow \iota
    \tag{step}
\end{align*}
From the first assumption $\kappa \models \iota$,
we can conclude that~$\iota$ holds in the more specific base case.
The step case which executes a leading iteration is an instance
of the second assumption for~$\tau = \ITER{c};\kappa$:
    \qed
\end{proof}

\begin{axiom}[Structured Commands]
    \label{ax:structured}
The axioms that govern the
interaction between commands, formulas, and continuations are as follows:
    \upshape
\begin{align}
(c_1;c_2);\kappa
    &\equiv c_1;(c_2;\kappa)
    && \text{associativity of seq. composition}
    \label{ax:seq}
    \\
(c_1 \sqcup c_2);\kappa
    &\equiv (c_1;\kappa) \lor (c_2;\kappa)
    && \text{distributivity of choice over seq.}
    \label{ax:choice}
    \\
\ITER{c};\kappa
    &\equiv \kappa \sqcup c;\ITER{c};\kappa
    \label{ax:iter}
    && \text{unfolding of iterations}
    \\[4pt]
(\phi \Rightarrow \psi)
    & \models
    (c;\phi) \Rightarrow  (c;\psi)
        \label{ax:mono}
    && \text{right-monotonicity of seq.}
\end{align}
\end{axiom}

\begin{theorem}[Correctness of \cref{alg:wp}]
    \label{thm:wp}
Recall that \cref{def:wp} characterizes the weakest precondition
of correctness of~$\kappa$ with respect to~$\psi$ simply as~$\kappa \Rightarrow \psi$.
Therefore, we show:
\begin{align*}
    \pre{\kappa}{\psi} \equiv (\kappa \Rightarrow \psi)
\end{align*}
\end{theorem}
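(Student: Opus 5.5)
We prove the equivalence by well-founded induction following the recursion of \cref{alg:wp}. The measure is lexicographic: first the number of atomic and structured commands in the program, then the nesting depth of the leading sequential composition, so that rule~\eqref{wp:seq} counts as decreasing. Since~$\psi$ is arbitrary in every recursive call, the induction hypothesis is stated for all contracts~$\psi$ simultaneously. Rule~\eqref{wp:iter} is not a function but an existential over~$\iota$, $\hypothesis_\iota$, and~$\omega$. For this case we read the statement as two inclusions: every admissible choice yields a sound precondition, and some choice yields exactly $\kappa \Rightarrow \psi$.

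\textbf{Simple cases.}

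Rule~\eqref{wp:stop} is the identity. For~\eqref{wp:choice} we use~\eqref{ax:choice}. Then $((c_1;\kappa)\lor(c_2;\kappa))\Rightarrow\psi$ is equivalent to the conjunction of the two implications, and the induction hypothesis applies to each conjunct. For~\eqref{wp:seq} we rewrite with~\eqref{ax:seq} and apply the induction hypothesis.

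For an atomic action, \cref{ax:atom} gives
\[
(A;\kappa)\Rightarrow\psi \;\equiv\; \forall\,\vx'.\ \step(A,\vx,\vx')\land(\forall\tau.\ \tau'\Leftrightarrow\nxt\tau)\Rightarrow(\kappa'\Rightarrow\psi).
\]
Under the shift premise, \cref{lem:unfold} lets us replace~$\psi$ by $\unfold(\psi,\vx,\vx')$. The induction hypothesis, applied to~$\kappa'$, turns $\kappa'\Rightarrow\unfold(\dots)$ into $\pre{\kappa'}{\unfold(\dots)}$; renaming preserves size, so the hypothesis is available. The one delicate point is discharging the higher-order shift premise, since it appears in \cref{ax:atom} but not in rule~\eqref{wp:atom}. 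I would argue semantically: $\vx'$ is fresh, so in any trace satisfying the left-hand side we may choose the witness $\vx'$ to be the values of~$\vx$ one step ahead. With that choice the premise holds by construction, which is exactly how the axiom is validated in the model.

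\textbf{Iteration, soundness direction.} Assume $\iota\Rightarrow\psi$ together with the two conjuncts, quantified over~$\vx$. By the induction hypothesis these become $\kappa\Rightarrow\iota$ and $c;\omega(\vx)\Rightarrow(\hypothesis_\iota(\omega(\vx))\Rightarrow\iota)$. Because~$\omega$ is a fresh placeholder, the second statement holds for every trace property~$\tau$ substituted for~$\omega$, with~$\vx$ generalizing over the start state. This gives the premise of \cref{lem:iter}, which yields $\ITER{c};\kappa\models\iota$, and hence $\models\psi$. The same argument must be carried out relative to a current trace rather than globally: the precondition and the trace assumptions have to be carried inside the judgements. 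I would do this by taking the contract to be $\alpha\Rightarrow\gamma$ and reading~$\models$ under the hypothesis~$\alpha$.

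\textbf{Iteration, completeness direction (main obstacle).} Here we must exhibit a choice. I would take $\iota := (\ITER{c};\kappa)\Rightarrow\psi$ and let $\hypothesis_\iota(\tau)$ be the trivial hypothesis $\lnot\iota$. Definition~\ref{def:hypothesis} holds for this choice because $\tau\models\lnot\iota\Rightarrow\iota$ is the same as $\tau\models\iota$. With this~$\iota$, the condition $\iota\Rightarrow\psi$ must be read under the assumption that the trace is produced by the loop. The exit conjunct $\kappa\Rightarrow\iota$ follows from~\eqref{ax:iter}. The step conjunct needs $c;\omega\Rightarrow(\ITER{c};\kappa\Rightarrow\psi)$ given $\lnot\iota$. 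The trivial hypothesis does not by itself connect~$\omega$ back to the loop. I expect this to be the hardest step: it requires either that~$\iota$ may mention program fragments (trace properties), or a richer~$\hypothesis_\iota$ that encodes $\omega\Rightarrow\ITER{c};\kappa$. I would attempt the latter and rely on right-monotonicity~\eqref{ax:mono} to push the implication through~$c$. The result would be completeness relative to expressiveness of the assertion language, as the introduction indicates.
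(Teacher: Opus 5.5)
Your simple cases and the soundness half of the iteration case match the paper's proof. Your lexicographic measure is a sensible sharpening of the paper's ``structural induction'', since \eqref{wp:seq}, \eqref{wp:atom} and \eqref{wp:iter} recurse on $c_1;(c_2;\kappa)$, $\kappa'$ and $c;\omega(\vx)$, none of which is a subterm.

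The genuine gap is the $\sledom$ half of the iteration case.
\begin{itemize}
\item Your choice $\iota \coloneqq (\ITER{c};\kappa)\Rightarrow\psi$ already fails the first conjunct of \eqref{wp:iter}. The formula $\iota\Rightarrow\psi$ is equivalent to $(\ITER{c};\kappa)\lor\psi$, which does not follow from the assumption $(\ITER{c};\kappa)\Rightarrow\psi$: take any trace on which neither the loop nor $\psi$ holds. That conjunct is an unconditional part of the verification condition, so there is no ``assumption that the trace is produced by the loop'' under which to read it.
\item The trivial hypothesis $\lnot\iota$ contributes nothing. The step conjunct degenerates to $c;\omega(\vx)\Rightarrow\iota$ for a fully abstract $\omega$, which fails once $\omega$ is instantiated by something that is not a loop execution.
\item The ``richer hypothesis'' is only announced, so neither non-trivial conjunct is actually established.
\end{itemize}

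The missing idea, which is what the paper uses, is to let $\iota$ \emph{be} a semantic encoding of the loop itself, $\iota\equiv\ITER{c};\kappa$, for instance as a transition-relation formula $\always T$. This is not expressible in plain LTL, which is why completeness is only relative. With that choice:
\begin{itemize}
\item $\iota\Rightarrow\psi$ is literally the assumption;
\item the exit conjunct $\kappa\Rightarrow\iota$ is an instance of \eqref{ax:iter};
\item $\psi$ no longer appears in the recursive obligations at all.
\end{itemize}
Only the step conjunct remains. It is discharged by a coinduction hypothesis drawn from the greatest-fixpoint reading of $\ITER{c}$, or from the $\always$ of the $\always T$ encoding in the style of \cref{lem:ci}.

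Your instinct to feed ``$\omega$ behaves like the loop'' back through $c$ via \eqref{ax:mono} points in this direction, with two caveats. First, the hypothesis must be guarded: the naive $\hypothesis_\iota(\tau)=(\tau\Rightarrow\iota)$ violates \cref{def:hypothesis} unless $\iota$ is valid, because $\tau\models(\tau\Rightarrow\iota)\Rightarrow\iota$ holds for every $\tau$. Second, it only closes the proof once $\iota$ is the loop rather than the implication.

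A smaller point concerns the atomic case. Passing from \eqref{wp:atom} to $(A;\kappa)\Rightarrow\psi$ only \emph{adds} the shift premise, which is plain weakening. The premise must be \emph{removed} in the converse direction. There $\vx'$ is universally quantified, so you cannot choose it as the shifted values of $\vx$. Instead you have to change the trace: keep the initial state and continue with the $\vx'$-values. That argument works for valid implications, not trace by trace.

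Pointwise, the converse can actually fail. Take $x\coloneqq x+1;\mathit{true}$ and $\psi=\nxt(x=5)$, on a trace where $x$ goes from $0$ to $7$. The implication holds vacuously there, but the computed precondition $x+1=5$ does not. The paper's one-line treatment of this case does not address the issue either.
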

\begin{proof}
By structural induction on the program~$\kappa$.
The base case of a plain formula~$\phi$ holds by definition~\eqref{wp:atom}.
The rules for nondeterministic choice~\eqref{wp:choice}
and sequential composition~\eqref{wp:seq}
follow from axioms~\eqref{ax:choice} and~\eqref{ax:choice}, respectively.

The rule for atomic actions~\eqref{wp:atom}
follows from \cref{ax:atom} by substituting the program
via~\eqref{eq:atom} and relying on \cref{lem:unfold}
to justify equivalence of the guarantee.

The $\models$ direction of rule for iterations~\eqref{wp:iter} follows from \cref{lem:iter}.
The argument is subtle in two regards:
First, we justify the premises of \cref{lem:iter},
which are formulated as entailments,
by the fact that the formulas in the verification condition are universally quantified, e.g., for the base case
$\big( \forall\ \vx.\ \pre{\kappa}{\iota} \big)
    \equiv \big( \forall\ \vx.\ \kappa \Rightarrow \iota \big )$
by appealing to the inductive hypothesis
and therefore $\pre{\ITER{c};\kappa}{\iota} \models \dots$ can make the assumption that $\kappa \models \iota$.
Second, for the step case, we additionally argue that
the syntactic quantification over placeholder~$\omega(\vx)$,
which are fully abstract,
satisfies the requirement of the second premise in \cref{lem:iter}.

The $\sledom$ direction assumes~$\ITER{c};\kappa \implies \psi$.
It can be realized semantically with any sound encoding
of the program $\iota \equiv \ITER{c};\kappa$ of the loop,
for example in terms of transition relations~$\always T$,
although that is not possible in plain LTL.
The proof then relies on a suitable coinduction hypothesis,
for example from the greatest fixpoint of the iteration construct or the~$\always$-operator of the transition relation encoding.
    \qed
\end{proof}

\begin{corollary}[Soundness and Completeness]
The approach presented in \cref{sec:wp} is sound and complete
relative to the (co-)induction principles made available
and relative to the background theory of state formulas.
\end{corollary}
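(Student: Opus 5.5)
The corollary follows from \cref{thm:wp} once soundness and completeness are stated in terms of correctness judgements. By \cref{def:hoare}, $\hoare{\alpha}{\kappa}{\gamma}$ holds iff $\kappa \models (\alpha \Rightarrow \gamma)$. Taking $\psi = (\alpha \Rightarrow \gamma)$, this is equivalent to $\models (\kappa \Rightarrow \psi)$. By \cref{thm:wp}, that is in turn equivalent to validity of $\pre{\kappa}{\psi}$.

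\emph{Soundness} reads: whenever the generated verification condition is provable in the background logic (for the annotated invariants $\iota$ and hypotheses $\hypothesis_\iota$), the judgement holds. This uses the $\models$ direction of \cref{thm:wp}. The structural induction there covers rules \eqref{wp:stop}, \eqref{wp:choice}, \eqref{wp:seq}, and \eqref{wp:atom} via \cref{ax:atom}, \cref{ax:structured} and \cref{lem:unfold}. Loops are handled by \cref{lem:iter}, instantiated with any hypothesis built from \cref{lem:wf}, \cref{lem:ci} and their compositions $\triangleright$. Each of these is a hypothesis in the sense of \cref{def:hypothesis}, so the premise of \cref{lem:iter} is met. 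Relativity to the background theory enters only because the leaves of the recursion are implications $\phi \Rightarrow \psi$ that the proof oracle must discharge.

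\emph{Completeness} reads: if $\hoare{\alpha}{\kappa}{\gamma}$ holds, then there exist annotations making the verification condition valid. I would prove this by the same structural induction, using the $\sledom$ direction of \cref{thm:wp}. For choice, sequence and atomic actions the rules are equivalences, so the inductive hypothesis transfers directly. For $\ITER{c};\kappa$ I would choose $\iota$ as the strongest trace description $\ITER{c};\kappa \Rightarrow \psi$ in its LTL-expressible approximation. I would choose $\hypothesis_\iota$ to be the (co)inductive principle matching the greatest fixpoint of the loop semantics. The exit and step obligations then reduce, by \eqref{ax:iter}, to instances of the inductive hypothesis for $\kappa$ and for $c;\omega(\vx)$.

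The main obstacle is exactly this iteration case. The loop $\ITER{c};\kappa$ is not in general expressible as an LTL formula, so completeness can only be claimed \emph{relative} to the available (co)induction principles. I would make that relativity explicit: assume the hypothesis family is rich enough that the semantic fixpoint argument sketched in \cref{thm:wp} (for instance via a transition-relation encoding $\always T$) is captured by some $\hypothesis_\iota$. Under that assumption, completeness for loops is inherited from that proof.
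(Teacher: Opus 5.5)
Your overall structure matches the paper. The corollary is read off from \cref{thm:wp}: soundness is its $\models$ direction (loops via \cref{lem:iter}), and completeness is its $\sledom$ direction, with the loop case discharged semantically by a coinduction principle for the greatest fixpoint of iteration.

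The gap is your witness for the loop in the completeness case. Take $\iota \coloneqq (\ITER{c};\kappa \Rightarrow \psi)$. Then the first conjunct $\iota \Rightarrow \psi$ of \eqref{wp:iter} simplifies to $\ITER{c};\kappa \lor \psi$. Under your assumption $\ITER{c};\kappa \models \psi$, this is just $\psi$. That is not valid in general (take $\psi = (x \ge 0 \Rightarrow \always\eventually\isPrime(x))$). It is also not implied trace-by-trace by $\ITER{c};\kappa \Rightarrow \psi$: consider a trace that is neither a loop trace nor satisfies $\psi$. A telltale sign is that your $\iota$ is valid, so the exit and step obligations become trivial and all the content has been pushed into $\iota \Rightarrow \psi$. \cref{lem:iter} forces $\ITER{c};\kappa \models \iota$ for any $\iota$ passing the loop obligations, while a valid side condition needs $\iota \models \psi$. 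So the generalization must sit between the loop and $\psi$; yours is entailed by $\psi$ instead of entailing it.

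The paper's witness is the loop itself, $\iota \equiv \ITER{c};\kappa$, which is what ``strongest trace description'' ought to mean. With this choice:
\begin{itemize}
\item $\iota \Rightarrow \psi$ is literally the assumption;
\item the exit obligation $\kappa \models \ITER{c};\kappa$ is \eqref{ax:iter};
\item only the step obligation $c;\omega(\vx) \models \hypothesis_\iota(\omega(\vx)) \Rightarrow \ITER{c};\kappa$ needs the coinduction principle.
\end{itemize}
Note that the inductive hypothesis does not discharge these obligations. It only turns the $\wpre$-formulas into these semantic entailments, which must still be proved from the choice of $\iota$ and $\hypothesis_\iota$.

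Also drop the ``LTL-expressible approximation''. An approximation need not land in the window $\ITER{c};\kappa \models \iota \models \psi$ with a provable step case, so it does not give completeness. The paper instead takes $\iota$ as the exact semantic trace property (via the shallow embedding) and explicitly concedes that this is not expressible in plain LTL. That, together with the availability of a matching coinduction principle, is the sense in which completeness is relative.
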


\section{Related Work}
\label{sec:related}

In temporal calculi like that of \citet{nakata2015hoare} for infinite traces
and the recent presentation of \citet{gurov2024expressive} for finite traces,
sequential composition at the program level is mapped directly to sequential composition of trace formulas~$\phi_1$ and~$\phi_2$, where~$Q$ constrains the common intermediate state.
\begin{align*}
\infer[\textsc{Trace-Seq}]
    {\hoare{P}{c_1;c_2}{\phi_1; Q; \phi_2}}
    {\hoare{P}{c_1}{\phi_1;Q}
        &&
     \hoare{Q}{c_2}{\phi_2}}
\end{align*}
This rule requires the split $\phi_1; \dots; \phi_2$ in the guarantee to be given upfront.
For LTL operators, which are generally recurrent after some steps,
we can imagine this split to be computed automatically,
e.g. conceptually we have $\always \phi = \always \phi; \always \phi$,
but this requires us to admit sequential composition in formulas in the first place
and to consider the semantics of formulas on finite traces.
This also does not provide the intermediate assertion~$Q$,
which is visible in~\cite{nakata2015hoare} presentation but remains implicit in~\cite{gurov2024expressive}.
\citet{beckert2013dynamic} instead base their calculus on weakest precondition operators
(technically Dynamic Logic with updates), thus inheriting the ease of dealing with sequential composition.

In the logic RGITL~\cite{ernst:amai2014} represents programs as formulas.
It relies on first-class induction support in the proof system.
In addition to well-founded induction over arbitrary terms,
the logic supports induction over safety properties,
which introduces explicit counters,
where we can prove a formula $\always \phi$
by deriving a contradiction from the assumption
$\exists n.\ n' + 1 = n \until \lnot \phi$.
This feature relies on step formulas with primed variables
and the mechanism is overall a bit too involved for streamlined automation in the auto-active paradigm.
The calculus is implemented in the KIV system~\cite{schellhorn2022software},
which is unique in its expressiveness and combination of features on the spectrum of tools.

A useful trick is to encode the correctness of loops more implicitly
is to embed (co-)inductive hypotheses into the program itself,
similarly to the encoding of loop contracts for state-based properties
using specification statements.
A very clear presentation is due to \citet[Remark 5.3]{gurov2024expressive},
where the inductive case for an iteration~$\ITER{c}$ is unfolded into~$c;\omega$
for a Skolem constant~$\omega$ that represents an inductive hypothesis
and that can be exchanged for the properties of the residual iterations.
Their theory does not support infinite traces, which makes reasoning about iterations simpler at the theoretical level
at the expense of not supporting liveness conditions.

A similar idea of encoding inductive hypotheses is presented by \citet{beckert2013dynamic},
which introduces a special modality for that purpose that somewhat obscures the idea.
The latter work observes that the unfolding
$\always \phi = \phi \xwuntil (\always \phi)$
is synchronizes more conveniently with loops than the unfolding
$\always \phi = \phi \land \nxt \always \phi$
where the variant~$\xwuntil$ of ``weak until'' upholds~$\phi$ for at least one step.
The authors show three proof rules, one for each temporal operator~$\always$, $\eventually$, and~$\until$.

A key limitation of~\citet{nakata2015hoare,beckert2013dynamic,gurov2024expressive}
is lack of support for nested inductive and coinductive arguments.
For example, a guarantee~$\always \eventually \phi$ may require an outer coinduction
combined with an inner induction over some well-founded measure, as shown in \cref{ex:primes}.
Earlier work on the weakest precondition of such progress properties is done by~\cite{lukkien1992weakest}.
To address this, the process of stacking up several hypotheses
must be decoupled from the decomposition of the iteration.
We emphasize that RGITL~\cite{schellhorn2022software} does support such proofs,
thanks to its interactive nature.
Therefore, the goal for this paper is to design an annotation mechanism
to guide the generation of verification conditions for such scenarios.

Further related work on incorporating finite histories and infinite traces
into deductive verification is~\cite{bubel2015dynamic,ernst:isola2022,oortwijn2020abstraction,soleimanifard2015procedure,hahnle2024context}.
In the model-checking world, $\mu$-calculus as an expressive logic~\cite{emerson1996model} has been widely used. It is used as the specification language in the mCRL2 toolset~\cite{bunte2019mcrl2}.

\citet{ioannidis2025structural} recently present an approach that covers
a mixed linear and branching time logic with a specific emphasis on liveness properties.
They present specific rules for various scenarios, including~$\always\eventually\phi$ combinations, but they lack a general mechanism to compose (co-)induction principles alongside a uniform proof rule for loops.
The approach is realized in Rocq, so while proofs are automated to some extent,
they still rely on user interaction, whereas our approach is to fully automate those parts that deal with program commands.

For state-based verification, continuation-like encodings for the correctness
of procedures and loops~\cite{paskevich2025coma,ernst:vmcai2022}
make proofs using certain induction principles easier and can help with
non-linear control flow like that of exceptions.
Cyclic proofs~\cite{brotherston2025cyclic} are another way of allowing more flexible (co)inductive reasoning.
Matching Logic~\cite{rocsu2010matching} as used in the K framework supports proofs in this way,
albeit at the semantic level.

\section{Conclusion}
\label{sec:conclusion}

We have presented a weakest precondition calculus for structured programs
and linear temporal logic properties over infinite traces.
It is made possible by combining several ideas found in the literature,
namely using continuation as part of program representation,
step-normal form to move forward in time,
and placeholder variables to represent (co-)inductive hypotheses.
We thus avoid the need to specify intermediate states manually,
a key aspect of verification condition generation.
Finally, we describe a uniform and perhaps novel mechanism to generate multiple nested hypotheses
for the verification of a loop before unfolding the verification condition
into the base and step case.
Our vision for the future is that auto-active verification tools adopt temporal logic specifications as first class features in the future,
and this paper represents a step forward in this direction.

\paragraph{Acknowledgement.}
Many thanks to Dilian Gurov for discussions and for pointing us to the loop rule in his and Reiner Hähnle's work,
which elegantly captures the inductive hypothesis in terms of a continuation,
an idea that has been incorporated here, too.
Many thanks to Gerhard Schellhorn for feedback and insights, in particular for coming up with the proof plan for the liveness example.
We thank the anonymous reviewers at SPIN~2026 for valuable feedback on an earlier draft.

We highly appreciate the fruitful discussions at Dagstuhl
seminars~22451 ``Principles of Contract Languages''
and~26031 ``Software Contracts meet System Contracts''
and the Lorentz Seminar on ``Contract Languages'',
which motivated and informed this contribution.

\renewcommand{\bibsection}{\section*{References}}

\bibliographystyle{plainnat}
\bibliography{references,ernst}

\end{document}